\newcommand{\wt}{{\mathrm{wt}}}
\newcommand{\comp}{{\mathrm{comp}}}
\newcommand{\Z}{\mathbb{{Z}}}
\newcommand{\tr}{{\mathrm{Tr}}}
\newcommand{\gf}{{\mathrm{GF}}}
\newcommand{\f}{{\mathbb{F}}}
\newcommand{\C}{{\mathcal{C}}}
\newcommand{\bc}{{\mathbf{c}}}
\newtheorem{theorem}{Theorem}
\newtheorem{corollary}[theorem]{Corollary}
\newtheorem{example}{Example}
\begin{document}

\title{A Construction of Linear Codes over $\f_{2^t}$ from Boolean Functions \thanks{The research of K. Feng was supported by NSFC No. 11471178, 11571007 and the Tsinghua National Lab. for Information Science and Technology.}}

\author{Can Xiang
\thanks{C. Xiang is with the College of Mathematics and Information Science, Guangzhou University, Guangzhou 510006, Guangdong Province, China. Email: cxiangcxiang@hotmail.com.} and
Keqin Feng\thanks{K. Feng is with the Department of Mathematical Sciences, Tsinghua University, Beijing, 100084, China.
Email: kfeng@math.tsinghua.edu.cn.} and
Chunming Tang\thanks{C. Tang is with the College of Mathematics and Information Science, Guangzhou University, Guangzhou 510006, Guangdong Province, China. Email: ctang@gzhu.edu.cn.}
}

\date{\today}
\maketitle

\begin{abstract}
In this paper, we present a construction of linear codes over $\f_{2^t}$ from Boolean functions, which is a generalization of Ding's method \cite[Theorem 9]{Ding15}. Based on this construction, we give two classes of linear codes $\tilde{\C}_{f}$ and $\C_f$ (see Theorem \ref{thm-maincode1} and Theorem \ref{thm-maincodenew}) over $\f_{2^t}$ from a Boolean function $f:\f_{q}\rightarrow \f_2$, where $q=2^n$ and $\f_{2^t}$ is some subfield of $\f_{q}$. The complete weight enumerator of $\tilde{\C}_{f}$ can be easily determined from the Walsh spectrum of $f$, while the weight distribution of the code $\C_f$ can also be easily settled. Particularly, the number of nonzero weights of $\tilde{\C}_{f}$ and $\C_f$ is the same as the number of distinct Walsh values of
$f$. As applications of this construction, we show several series of linear codes over $\f_{2^t}$ with two or three weights by using bent, semibent, monomial and quadratic Boolean function $f$.

\end{abstract}

\begin{keywords}
Linear codes, Walsh spectrum, Boolean function, weight enumerator, complete weight enumerator, bent and semibent functions.
\end{keywords}

\section{Introduction}\label{sec-intro}

Throughout this paper, let $q=2^n$ for any positive integer $n\geq2$, $\f_q$  be the finite field with
$q$ elements and $\f_q^*=\f_q  \backslash \{0\}$. By viewing $\f_q$ as a $n$-dimensional vector space $\f_2^{~n}$ with respect to a fixed $\f_2$-basis of $\f_q$, a function $f$ from $\f_{2^n}$  (or $\f_2^{~n}$) to $\f_2$ is called a Boolean function with $n$ variables, which can be expressed by
$$
f=f(x):\f_q \rightarrow \f_2.
$$
The Walsh transform of $f$ is defined by
\begin{eqnarray}\label{eqn-w}
W_f(y)=\sum_{x \in \f_q} (-1)^{f(x)+\tr(xy)} \in \Z
\end{eqnarray}
where $y\in \f_{2^n}$ and $\tr$ is the trace function from $\f_{2^n}$ to $\f_2$. The Walsh spectrum of $f$ is the following multiset
$$
\{W_f(y):y\in \f_{2^n}\}.
$$
For convenience, we denote the Walsh spectrum of $f$ by
\begin{eqnarray}\label{eqn-walshspectrum}
[w_1]^{m_1}[w_2]^{m_2}\cdots[w_r]^{m_r}
\end{eqnarray}
if the multiset has $r$ distinct values $w_1,w_2,\cdots,w_r$, where $m_i=|\{y \in \f_q: W_f(y)=w_i\}|$ for each positive integer $i$ satisfying $1\leq i\leq r$. It is obvious that $m_1+m_2+\cdots+m_r=q$.

Recently, C.Ding \cite{Ding15} presented a construction of binary linear code $\C_f$ from a Boolean function $f$ such that the weight distribution of $\C_f$ can be derived directly from the Walsh spectrum of $f$ \cite[Theorem 9]{Ding15}. The number of distinct nonzero weights of $\C_f$ is the same as the number $r$ of distinct Walsh values of $f$. Particularly, from bent and semibent functions he gets a series of binary linear codes with two and three weights, respectively. In this paper, we show a generalization (see Theorem \ref{thm-maincode1}) of Ding's construction \cite{Ding15} by the following way. Firstly, from any Boolean function $f:\f_q\rightarrow \f_2$ we construct two classes of linear codes
$\tilde{\C}_{f}$ and $\C_{f}$ over some subfield $\f_{2^t}$ of $\f_q$ for some $t$ satisfying a certain condition. Secondly, the complete weight
enumerator of $\tilde{\C}_{f}$ and the weight enumerator of $\C_{f}$ can be determined in terms of the Walsh spectrum of $f$. Particularly, we show that the code $\tilde{\C}_{f}$ is `nearly' a composition constant code which means that for each codeword $c=(c_1,c_2,...,c_{N}) \in \tilde{\C}_{f}$, $N_c(\alpha)$ are the same for all $\alpha \in \f_{2^t}^*$, where
\begin{eqnarray}\label{eqn-NC}
N_c(\alpha)=|\{1\leq i \leq N: c_i=\alpha\}|
\end{eqnarray}
is the number of $\alpha$-components of the codeword $c$.

Now we recall some terminology on linear codes. An $[N,\, k]$ linear code $\C$ over $\f_{2^t}$ is a subspace of $\f_{2^t}^{~N}$ with dimension ${\mathrm{dim}}_{\f_{2^t}}\C=k$. For each codeword $c=(c_1,c_2,...,c_{N}) \in \C$ and $\alpha \in \f_{2^t}$, let $N_c(\alpha)$ be the number of $\alpha$-components of the codeword $c$. Namely, $N_c(\alpha)$ is defined by (\ref{eqn-NC}). Then $\sum_{\alpha \in \f_{2^t}}N_c(\alpha)=N$. The complete weight enumerator of $\C$ is a polynomial on $x_\alpha(\alpha \in \f_{2^t})$ defined by
\begin{eqnarray}\label{eqn-complete}
\textrm{K}_{\C}(X)=\textrm{K}_{\C}(x_\alpha:~\alpha \in \f_{2^t})=\sum_{c\in \C}~\prod_{\alpha \in \f_{2^t}} x_{\alpha} ^ {~N_{c}(\alpha)} \in \Z[x_\alpha: \alpha \in \f_{2^t}].
\end{eqnarray}
This is a homogeneous polynomial with degree $N$. Let $x_0=1$ and $x_\alpha=x$ for all $\alpha \in \f_{2^t}^{~*}$ in (\ref{eqn-complete}), we get the weight enumerator of $\C$:
$$
\textrm{E}_{\C}(x)=1+\sum_{i=1}^{N}A_ix^i \in \Z[x],
$$
where
$$A_i=|\{c\in \C:\textrm{wt}_{\textrm{H}}(c)=\sum_{\alpha \in \f_{2^t}^*}N_c(\alpha)=i\}|$$
is the number of codewords $c$ in $\C$ with the Hamming weight $\textrm{wt}_{\textrm{H}}(c)=i$. $(1,A_1,\ldots,A_N)$  is called the weight distribution of $\C$.
The code $\C$ is said to be $v$ weights if the number of nonzero $A_i$ in the sequence $(A_1, A_2, \cdots, A_N)$ is equal to $v$.
For binary case $t=1$, it is obvious that the complete weight enumerator of $\C$ is essentially the same as the weight enumerator. When $t\geq 2$,
the complete weight enumerator $\textrm{K}_{\C}(X)$ of $\C$ gives more information on the code $\C$ than the weight enumerator of $\C$.

This paper is organized as follows. In Section \ref{sec-mainresult1}, we present a generalization (see Theorem \ref{thm-maincode1}) of Ding's construction. From a Boolean function $f:\f_q\rightarrow \f_2$ we construct two classes of linear codes
$\tilde{\C}_{f}$ and $\C_{f}$ over $\f_{2^t}$, while the complete weight
enumerator of $\tilde{\C}_{f}$ and the weight enumerator of $\C_{f}$ can be determined in terms of the Walsh spectrum of $f$, where $t$ is a divisor of $n$ satisfying a certain condition. If $f$ has $r$ distinct Walsh values, then both of $\tilde{\C}_{f}$ and $\C_{f}$ have $r$ distinct
nonzero weights. As applications of Theorem \ref{thm-maincode1}, Section \ref{sec-application} gives several series of linear codes $\C_{f}$ over $\f_{2^t}$ with two or three weights constructed by using bent, semibent, monomial and quadratic Boolean function $f:\f_q\rightarrow \f_2$. Moreover, we also present a method (see Theorem \ref{thm-maincodenew}) by concatenation based on Theorem \ref{thm-maincode1}, and get more linear codes over $\f_{2^t}$ with three weights and more flexible length. The last Section \ref{sec-concluding} is conclusion.

\section{A Construction of Linear Codes over $\f_{2^t}$}\label{sec-mainresult1}

The objective of this section is to present a generalization of Ding's construction for linear codes and construct two classes of linear codes $\tilde{\C}_{f}$ and $\C_f$ over $\f_{2^t}$ from a Boolean function $f$, where $t$ is a divisor of $n$ satisfying a certain condition.

Let $f=f(x)$ be a Boolean function from $\f_{2^n}$ to $\f_2$. In this section, we consider a subfield $\f_{2^t}$ of $\f_{2^n}$, where $t$ satisfies
the following condition:
\begin{eqnarray}\label{dfn-condition}
\mbox{ $f(\alpha x)=f(x) $ for all $\alpha \in \f_{2^t}^*$ and $x\in \f_q$ }.
\end{eqnarray}
Namely, $f$ is a constant on each coset of subgroup $\f_{2^t}^*$ in $\f_q^*$.

Define
\begin{eqnarray}\label{eqn-maincodedefset}
\tilde{D}=\tilde{D}(f)=\{x \in \f_{q}^*: f(x)=0\}.
\end{eqnarray}
The condition (\ref{dfn-condition}) implies that the set $\tilde{D}$ is an union of $N$ distinct cosets $x_1\f_{2^t}^{~*},\cdots, x_N\f_{2^t}^{~*}$ in $\f_q^*$. Therefore, $\tilde{N}=|\tilde{D}|=(2^t-1)N$. Let $D=\{x_1,x_2,\cdots, x_N\}$ and $\tilde{D}=\{x_1,x_2,\cdots, x_N,x_{N+1},\cdots,x_{\tilde{N}}\}$. Now we define two classes of linear codes over $\f_{2^t}$ as follows:
\begin{eqnarray}\label{eqn-maincode1f}
\tilde{\C}=\tilde{\C}_{f}=\{\tilde{c}_b=(\tr_t^n(bx_1), \tr_t^n(bx_2), \ldots, \tr_t^n(bx_{\tilde{N}})): b \in \f_{q}\}
\end{eqnarray}
and
\begin{eqnarray}\label{eqn-maincode2f}
\C=\C_{f}=\{c_b=(\tr_t^n(bx_1), \tr_t^n(bx_2), \ldots, \tr_t^n(bx_{N})): b \in \f_{q}\},
\end{eqnarray}
where $\tr_t^n$ is the trace function from $\f_{2^n}$ to $\f_{2^t}$.

The following theorem is the main result of this paper.

\begin{theorem}\label{thm-maincode1}
Let $q=2^n$ and $f(x):\f_q \rightarrow \f_2$ be a Boolean function with the Walsh spectrum $[w_1]^{m_1}[w_2]^{m_2}\cdots[w_r]^{m_r}$. Assume that
\begin{itemize}
  \item $W_f(0)=w_i$ for some $i$ with $1\leq i \leq r$;
  \item $w_i-w_j\neq -q$ for all positive integer $j$, $1\leq j \leq r$; and
  \item $t$ is an positive integer satisfying $t|n$ and the condition (\ref{dfn-condition}).
\end{itemize}
Then
\begin{enumerate}
  \item the linear code $\tilde{\C}=\tilde{\C}_{f}$ of (\ref{eqn-maincode1f}) has parameters $[\tilde{N}, n/t]$ and the complete weight enumerator
\begin{eqnarray}\label{eqn-completetheorem}
\textrm{K}_{\tilde{\C}}(X)=&~x_0^{\tilde{N}}+(m_{i}-1)x_0^{N_i}\prod_{\gamma \in \f_{2^t}^*}x_\gamma^{~2^{-(t+1)}q}+ 
\sum_{1\leq j\leq r(j\neq i)}m_{j}~x_0^{N_j}(\prod_{\gamma \in \f_{2^t}^*}x_\gamma)^{M_j},
\end{eqnarray}
where
\begin{eqnarray*}
\tilde{N}&=&2^{n-1}+\frac{1}{2}(w_i-1-(-1)^{f(0)}) \nonumber \\
N_j&=&2^{-t-1}(q+w_i+(2^t-1)w_j)-2^{-1}(1+(-1)^{f(0)})\nonumber \\
M_j&=&2^{-t-1}(q+w_i-w_j)\nonumber
\end{eqnarray*}
for each positive integer $j$ within $1\leq j\leq r$ and $i$ is determined by $W_f(0)=w_i$. The weight enumerator of $\tilde{\C}$ is
\begin{eqnarray}\label{eqn-1weighttheorem}
\textrm{E}_{\tilde{\C}}(x)=1+(m_i-1)x^{(2^t-1)q/2^{t+1}}+\sum_{j=1,j\neq i}^{r}m_j~x^{(2^t-1)M_j}.
\end{eqnarray}
\item the linear code $\C=\C_{f}$ of (\ref{eqn-maincode2f}) has parameters $[N, n/t]$ and the weight enumerator of $\C$ is
\begin{eqnarray}\label{eqn-2weighttheorem}
\textrm{E}_{\C}(x)=1+(m_i-1)x^{q/2^{t+1}}+\sum_{j=1,j\neq i}^{r}m_j~x^{M_j},
\end{eqnarray}
where $N=\tilde{N}/(2^t-1)$.
\end{enumerate}
Particularly, both of $\tilde{\C}$ and $\C$ have $r$ distinct (nonzero) weights.
\end{theorem}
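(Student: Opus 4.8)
The plan is to analyze the two codes through the character-sum machinery that links trace functions over $\f_{2^t}$ to the Walsh transform of $f$ over $\f_{2^n}$. For the dimension claim, I would first observe that the map $b \mapsto \tilde c_b$ (respectively $b \mapsto c_b$) is $\f_{2^t}$-linear from $\f_q$, viewed as an $\f_{2^t}$-space of dimension $n/t$, into $\f_{2^t}^{\tilde N}$ (resp. $\f_{2^t}^{N}$). So the dimension is $n/t$ provided the map is injective, i.e. provided no nonzero $b$ has $\tr_t^n(bx)=0$ for every $x$ in $\tilde D$; since $\tilde D$ is a union of cosets of $\f_{2^t}^*$ and (under the stated hypotheses) spans $\f_q$ over $\f_{2^t}$, this injectivity will drop out of the weight computation — a nonzero $b$ yields a nonzero codeword.

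The core of the argument is the weight (and composition) count. Fix $b \in \f_q^*$. For $\gamma \in \f_{2^t}$ I would count $N_{\tilde c_b}(\gamma) = |\{x \in \tilde D : \tr_t^n(bx)=\gamma\}|$ using the additive-character orthogonality over $\f_{2^t}$, writing the indicator of $\tr_t^n(bx)=\gamma$ as $\frac{1}{2^t}\sum_{z \in \f_{2^t}}(-1)^{\tr_1^t(z(\tr_t^n(bx)-\gamma))}$ and using transitivity of the trace, $\tr_1^t\circ\tr_t^n = \tr$. Combining this with the indicator of $x \in \tilde D$, namely $\frac12(1+(-1)^{f(x)})$ on $\f_q^*$, and summing over $x \in \f_q$, everything reduces to evaluating $\sum_{x\in\f_q}(-1)^{f(x)+\tr(zbx)}$, which is exactly $W_f(zb)$. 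The condition (\ref{dfn-condition}) guarantees that as $z$ ranges over $\f_{2^t}^*$, $zb$ ranges over the coset $b\f_{2^t}^*$, on which $W_f$ is constant (because $f(\alpha x)=f(x)$ forces $W_f(\alpha y)=W_f(y)$ for $\alpha \in \f_{2^t}^*$); calling that common value $w_{j}$ where $W_f(b)=w_j$, and using $W_f(0)=w_i$ for the $z=0$ term, the sums collapse to the stated closed forms for $N_j$, $M_j$, and hence $N_{\tilde c_b}(\gamma)$. The key structural point — that $\tilde\C$ is ``nearly composition-constant'' — is that the resulting count of $N_{\tilde c_b}(\gamma)$ does not depend on $\gamma \in \f_{2^t}^*$, only on which Walsh value $b$ hits; this is why $x_\gamma$ appears only through $\prod_{\gamma\in\f_{2^t}^*}x_\gamma$ in (\ref{eqn-completetheorem}). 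The hypothesis $w_i - w_j \neq -q$ is exactly what is needed to ensure these multiplicities are positive (so the corresponding codewords genuinely occur) and, when $j=i$ with $b$ ranging over the $m_i-1$ nonzero elements with $W_f(b)=w_i$, that the exponent simplifies to the clean value $2^{-(t+1)}q$; I would check separately that $W_f(b)=W_f(0)$ for $b\ne 0$ does give the uniform exponent $2^{-(t+1)}q$ for all of $\gamma\in\f_{2^t}^*$. Summing $\prod_\gamma x_\gamma^{N_{\tilde c_b}(\gamma)}$ over all $b$, grouping by Walsh value, and adding the $b=0$ term $x_0^{\tilde N}$ yields (\ref{eqn-completetheorem}); specializing $x_0=1$, $x_\gamma=x$ gives (\ref{eqn-1weighttheorem}).

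For part (2), the same computation applies verbatim with $\tilde D$ replaced by the transversal $D$; since $\tilde D$ is a disjoint union of cosets $x_\ell\f_{2^t}^*$ and $\tr_t^n(b\,\alpha x_\ell)=\alpha\,\tr_t^n(bx_\ell)$ for $\alpha\in\f_{2^t}^*$, each coset contributes to $\tilde c_b$ the $\f_{2^t}^*$-orbit of the single value $\tr_t^n(bx_\ell)$ attached to it in $c_b$. Hence $N_{\tilde c_b}(\gamma) = N_{c_b}(0)$ when $\gamma=0$ is handled separately, and for $\gamma\ne0$ the Hamming weight satisfies $\wt_{\textrm H}(\tilde c_b) = (2^t-1)\,\wt_{\textrm H}(c_b)$, while a nonzero component $\alpha$ of $c_b$ spreads over the whole coset; this immediately converts (\ref{eqn-1weighttheorem}) into (\ref{eqn-2weighttheorem}) after dividing exponents by $2^t-1$, and gives $N=\tilde N/(2^t-1)$, $\dim = n/t$. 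Finally, since the exponents $(2^t-1)M_j$ in (\ref{eqn-1weighttheorem}) (resp. $M_j$ in (\ref{eqn-2weighttheorem})) are pairwise distinct precisely when the $w_j$ are — and they are, by definition of the Walsh spectrum — and each occurs with positive multiplicity $m_j$ (using $w_i-w_j\neq -q$), both codes have exactly $r$ nonzero weights.

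The main obstacle I anticipate is purely bookkeeping: correctly tracking the $z=0$ term versus the $z\in\f_{2^t}^*$ terms in the character sum, and verifying that the exponent for the ``$j=i$'' codewords collapses to $2^{-(t+1)}q$ independently of $\gamma$ — this is where the hypothesis $w_i-w_j\neq-q$ (here with $j=i$, i.e.\ really the genericity of the other terms) and the identity $W_f(b)=w_i$ interact, and it is easy to be off by a factor of $2^t-1$ or to mishandle the coset of $0$. Everything else is a routine application of trace transitivity and additive-character orthogonality.
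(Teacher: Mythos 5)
Your proposal is correct and follows essentially the same route as the paper: the same character-sum computation of $N_{\tilde c_b}(\gamma)$ via orthogonality over $\f_{2^t}$ and trace transitivity, the same use of condition (\ref{dfn-condition}) to get $W_f(\alpha b)=W_f(b)$, the same grouping of $b\in\f_q^*$ by Walsh value (with multiplicity $m_i-1$ at $w_i$), and the same coset argument giving $\wt_{\mathrm H}(\tilde c_b)=(2^t-1)\wt_{\mathrm H}(c_b)$ for part (2). The only slight imprecision is your reading of $w_i-w_j\neq -q$ as ensuring positive multiplicities; its actual role (as in the paper) is to make every $M_j\neq 0$, so all nonzero $b$ give nonzero codewords, which yields both the dimension $n/t$ and the nonzeroness of the listed weights.
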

\begin{proof} \emph{1)}~By definition and assumption , the length of $\tilde{\C}$ is
\begin{eqnarray}\label{eqn-N}
\tilde{N}
&=& \left|\left\{x \in \f_q^{~*}: f(x)=0\right\}\right| \nonumber \\
&=& \frac{1}{2} \sum_{y \in \f_2} \sum_{x \in \f_q} (-1)^{yf(x))}-\frac{1}{2}(1+(-1)^{f(0)}) \nonumber \\
&=& \frac{1}{2} \sum_{x \in \f_q} (1+(-1)^{f(x)})-\frac{1}{2}(1+(-1)^{f(0)}) \nonumber \\
&=& \frac{1}{2}(q+W_f(0)-1-(-1)^{f(0)}) \nonumber \\
&=& 2^{n-1} +  \frac{1}{2}(w_i-1-(-1)^{f(0)}) \nonumber.
\end{eqnarray}

Next we calculate the number
$$
N_{\tilde{c}(b)}(\gamma)=|\{x \in \f_q^*: f(x)=0 \mbox{ and } \tr_{t}^{n}(bx)=\gamma\}|
$$
for any $b \in \f_q^*$ and any $\gamma \in \f_{2^t}$.

By definition and the basic facts of additive characters, we have
\begin{eqnarray}\label{eqn-wab1}
N_{\tilde{c}(b)}(\gamma) &=& \frac{1}{2^{t+1}} \sum_{x \in \f_q^*}(1+(-1)^{f(x)}) \sum_{\alpha \in \f_{2^t}} (-1)^{\tr_1^t(\alpha(\tr_t^n(bx)+\gamma))} \nonumber \\
&=& \frac{1}{2^{t+1}} \sum_{\alpha \in \f_{2^t}} (-1)^{\tr_1^t(\alpha \gamma)} \sum_{x \in \f_q^*}(1+(-1)^{f(x)})(-1)^{\tr(bx\alpha)}  \nonumber \\
&=& \frac{1}{2^{t+1}}(\sum_{\alpha \in \f_{2^t}} (-1)^{\tr_1^t(\alpha \gamma)} \sum_{x \in \f_q}(1+(-1)^{f(x)})(-1)^{\tr(bx\alpha)}- \nonumber \\
& & \sum_{\alpha \in \f_{2^t}} (-1)^{\tr_1^t(\alpha \gamma)}(1+(-1)^{f(0)}))  \nonumber  \\
&=& \left\{ \begin{array}{ll}
        \frac{1}{2^{t+1}}(\sum_{\alpha \in \f_{2^t}} (-1)^{\tr_1^t(\alpha \gamma)} \sum_{x \in \f_q}(1+(-1)^{f(x)})(-1)^{\tr(bx\alpha)}) & \mbox{ if $\gamma \neq 0$} \nonumber \\
        \frac{1}{2^{t+1}}(\sum_{\alpha \in \f_{2^t}} \sum_{x \in \f_q}(1+(-1)^{f(x)})(-1)^{\tr(bx\alpha)}-
2^t(1+(-1)^{f(0)}))  & \mbox{ if $\gamma=0$}
\end{array}
\right. \nonumber \\
&=& \left\{ \begin{array}{ll}
        \frac{1}{2^{t+1}}(\sum_{x \in \f_q}(1+(-1)^{f(x)})+\sum_{\alpha \in \f_{2^t}^*}(-1)^{\tr_1^t(\alpha \gamma)}W_f(b \alpha)) & \mbox{ if $\gamma \neq 0$} \\
        \frac{1}{2^{t+1}}(\sum_{x \in \f_q}(1+(-1)^{f(x)})+\sum_{\alpha \in \f_{2^t}^*}W_f(b\alpha))-
\frac{1}{2}(1+(-1)^{f(0)})  & \mbox{ if $\gamma=0$}
\end{array}
\right.
\end{eqnarray}

But for any $b \in \f_q^*$ and any $\alpha \in \f_{2^t}^*$, we have
\begin{eqnarray}\label{eqn-wab}
W_f(\alpha b)&=& \sum_{x \in \f_q} (-1)^{f(x)+\tr(\alpha bx)} \nonumber \\
&=& \sum_{x' \in \f_q} (-1)^{f(\alpha^{-1}x')+\tr(bx')} ~~~~~~~~~(\mbox{ Let $x'=\alpha x$})\nonumber \\
&=& \sum_{x' \in \f_q} (-1)^{f(x')+\tr(bx')} ~~~~~~~~~~~~~(\mbox{ By the condition (\ref{dfn-condition}}))\nonumber \\
&=& W_f(b).
\end{eqnarray}

In view of (\ref{eqn-wab}), formula (\ref{eqn-wab1}) becomes that
\begin{eqnarray}\label{eqn-simple}
N_{\tilde{c}(b)}(\gamma)
&=& \left\{ \begin{array}{ll}
        \frac{1}{2^{t+1}}(q+W_f(0)-W_f(b)) & \mbox{ if $\gamma \neq 0$} \nonumber \\
        \frac{1}{2^{t+1}}(q+W_f(0)+(2^{t}-1) W_f(b))-
\frac{1}{2}(1+(-1)^{f(0)})  & \mbox{ if $\gamma=0$}
\end{array}
\right. \nonumber \\
&=& \left\{ \begin{array}{ll}
        \frac{1}{2^{t+1}}(q+w_i-w_j) & \mbox{ if $\gamma \neq 0$}  \\
        \frac{1}{2^{t+1}}(q+w_i+(2^{t}-1) w_j)-
\frac{1}{2}(1+(-1)^{f(0)})  & \mbox{ if $\gamma=0$}
\end{array}
\right.
\end{eqnarray}
By assumption that $W_f(0)=w_i$, $W_f(b)=w_j$ and the walsh spectrum of $f$ is $[w_1]^{m_1}[w_2]^{m_2}\cdots[w_r]^{m_r}$, we have
\begin{eqnarray}\label{eqn-codeweight2}
W_f(b)=\left\{ \begin{array}{ll}
w_1                               & \mbox{ occurs $m_1$ times,} \\
\vdots                               & \mbox{ $\vdots$} \\
w_i                               & \mbox{ occurs $m_i-1$ times,} \\
w_{i+1}                               & \mbox{ occurs $m_{i+1}$ times,} \\
\vdots                               & \mbox{ $\vdots$} \\
w_r                               & \mbox{ occurs $m_r$ times,}
\end{array}
\right.
\end{eqnarray}
when $b$ runs through $\f_q^*$. Thus we get the formula (\ref{eqn-completetheorem}) on $\textrm{K}_{\tilde{\C}}(X)$. The formula (\ref{eqn-1weighttheorem}) can be derived directly by taking $x_0=1$ and $x_\gamma=x$ for all $\gamma \in \f_{2^t}^*$ in $\textrm{K}_{\tilde{\C}}(X)$. Meanwhile, by assumption that $w_i-w_j\neq-q$ for all positive integer $j$ satisfying $1\leq j\leq r$, we know that the Hamming weight of $\tilde{c}(b)$ is
$$
\wt_{\textrm{H}}(\tilde{c}(b))=(2^t-1)M_j=\frac{2^t-1}{2^t}(q+w_i-w_j)\neq 0
$$
for each $b \in \f_q^*$.
This means that the code $\tilde{\C}_f$ has $q$ distinct codewords. Hence, the dimension of the code $\tilde{\C}_f$ is $\textrm{dim}_{\f_{2^t}}{\tilde{\C}}=\textrm{log}_{2^t}q=\frac{n}{t}$.

\emph{2)} For each $b\in \f_q$, we have the codeword $\tilde{c}(b)=(\tr_t^{n}(bx))_{x\in \tilde{D}}$ in the code $\tilde{\C}$ and
$c(b)=(\tr_t^{n}(bx))_{x\in D}$ in the code $\C$, where $\tilde{D}=\bigcup_{x\in D}~x\f_{2^t}^*$ is a disjoint union of cosets $x\f_{2^t}^*~(x\in D)$. It is obvious that $\tilde{N}=|\tilde{D}|=(2^t-1)|D|=(2^t-1)N$. For each element $x\alpha~(x\in D,\alpha \in \f_{2^t}^*)$, $\tr_t^{n}(bx\alpha)=\alpha \tr_t^{n}(bx)$. Therefore, $\tr_t^{n}(bx\alpha)\neq 0$ if and only if $\tr_t^{n}(bx)\neq 0$. This implies that
$\wt_{\textrm{H}}(\tilde{c}(b))=(2^t-1)\wt_{\textrm{H}}(c(b))$ for all $b\in \f_q^*$. Then the formula (\ref{eqn-2weighttheorem}) on weight enumerator $\textrm{E}_{\C}(x)$ can be derived from the formula (\ref{eqn-1weighttheorem}). This completes the proof of
Theorem \ref{thm-maincode1}.
\end{proof}

Let $t=1$. It is clear that the condition (\ref{dfn-condition}) is true. Form Theorem \ref{thm-maincode1} we get the following corollary on binary linear codes which essentially is \cite[Theorem 9]{Ding15}.
\begin{corollary}\label{thm-maincode1t1}
Let $f=f(x): \f_q \rightarrow \f_2$ be a Boolean function with Walsh spectrum $[w_1]^{m_1}[w_2]^{m_2} \cdots [w_r]^{m_r}$. Let $\tr$ be the trace mapping from $\f_q$ to $\f_2$. Assume that $W_f(0)=w_i$ and $w_i-w_j\neq-q$ for all $j~(1\leq j \leq r)$. Let $D=D_f=\{x_1,x_2,\cdots,x_N\}$ be the set of zeros of $f(x)$ in $\f_q^*$, $N=|D|$, Then the subset $\C=\C_f$ of $\f_2^{N}$ defined by
$$
\C=\C_f=\{c(b)=(\tr(bx_1),\tr(bx_2),\cdots, \tr(bx_N)):b\in \f_q\}
$$
is a binary linear code with the parameters  $[N, n]$ and
the weight enumerator
$$
1+(m_i-1)~x^{q/4}+\sum_{1\leq j\leq r,~j\neq i}~m_j~x^{(q+w_i-w_j)/4},
$$
where $N=2^{n-1}+\frac{1}{2}(w_i-1-(-1)^{f(0)})$.
\end{corollary}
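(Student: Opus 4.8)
The plan is to derive this statement as the special case $t=1$ of Theorem~\ref{thm-maincode1}, so the only work is to check that the hypotheses specialize correctly and that the formulas collapse to the claimed shape. First I would observe that for $t=1$ the subfield is $\f_{2^t}=\f_2$, hence $\f_{2^t}^*=\{1\}$, and the condition~(\ref{dfn-condition})---$f(\alpha x)=f(x)$ for all $\alpha\in\f_{2^t}^*$---holds automatically, since it only asserts $f(x)=f(x)$; also $t\mid n$ trivially. The two remaining hypotheses of Theorem~\ref{thm-maincode1}, namely that $W_f(0)=w_i$ for some $i$ and that $w_i-w_j\neq-q$ for every $j$ with $1\le j\le r$, are exactly the hypotheses assumed in the corollary. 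Therefore Theorem~\ref{thm-maincode1} applies with $t=1$.

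Next I would track the constants. Since $2^t-1=1$, the set $\tilde D$ of~(\ref{eqn-maincodedefset}) is a disjoint union of $N$ singleton cosets $\{x_i\}$, so $\tilde N=(2^t-1)N=N$ and the two codes $\tilde\C_f$ and $\C_f$ of~(\ref{eqn-maincode1f}) and~(\ref{eqn-maincode2f}) coincide; moreover $\tr_t^n=\tr$ is the absolute trace to $\f_2$ used in the corollary. By part~2) of Theorem~\ref{thm-maincode1}, $\C=\C_f$ has parameters $[N,n/t]=[N,n]$ and the weight enumerator~(\ref{eqn-2weighttheorem}). Substituting $t=1$ into the exponents there gives $q/2^{t+1}=q/4$ and $M_j=2^{-t-1}(q+w_i-w_j)=(q+w_i-w_j)/4$, so the weight enumerator becomes
$$
1+(m_i-1)\,x^{q/4}+\sum_{1\le j\le r,\ j\neq i} m_j\,x^{(q+w_i-w_j)/4}.
$$
Finally, the length is read off directly: $N=\tilde N/(2^t-1)=\tilde N=2^{n-1}+\tfrac12\bigl(w_i-1-(-1)^{f(0)}\bigr)$.

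There is no genuine obstacle here, as the statement is a clean corollary; the only points that deserve a moment's care are (i) confirming that condition~(\ref{dfn-condition}) is vacuous at $t=1$ rather than an extra restriction, so that the corollary genuinely applies to \emph{every} Boolean function meeting the two Walsh-value hypotheses, and (ii) recalling that the dimension is exactly $n$ because the hypothesis $w_i-w_j\neq-q$ forces $\wt_{\textrm{H}}(\tilde c(b))\neq0$ for each $b\in\f_q^*$ in Theorem~\ref{thm-maincode1}, hence the map $b\mapsto c(b)$ from $\f_q$ to the code is injective. Once these are noted, the weight-enumerator identity is a one-line substitution.
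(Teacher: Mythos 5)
Your proposal is correct and matches the paper's own treatment: the paper likewise obtains this corollary by setting $t=1$ in Theorem \ref{thm-maincode1}, noting that condition (\ref{dfn-condition}) holds trivially, and reading off the parameters and weight enumerator by direct substitution.
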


We remark that  Theorem \ref{thm-maincode1} open a way to construct linear codes $\tilde{\C}_f$ and $\C_f$ over $\f_{2^t}$ with $r$ distinct nonzero weights from any Boolean function $f$ with $r$ distinct Walsh values.

\section{Linear codes over $\f_{2^t}$ with two or three weights}\label{sec-application}

In this section, we show several examples on linear codes with two or three weights as applications of Theorem \ref{thm-maincode1}. Moreover, we also give a method by concatenation based on Theorem \ref{thm-maincode1}, and get more linear codes with three weights.

For t=1, Ding has presented such examples on binary linear codes from bent, semibent, almost bent and quadratic Boolean functions \cite[Corollary 10, 11, 13 and Theorem 14]{Ding15}. Thus we focus on construction of linear codes over $\f_{2^t}$ for the case $t\geq 2$.
Since the parameters of the code $\C_{f}$ is better of the code $\tilde{\C}_{f}$ in Theorem \ref{thm-maincode1}, from now on we focus on the linear code
$\C_{f}$.

\subsection{Two weights linear codes over $\f_{2^t}$}
In this subsection, we construct several classes of linear codes over $\f_{2^t}$ with two weights by using bent functions.

Let $f$ be a Boolean function from $\f_q$ to $\f_2$ throughout this subsection. If $f$ is bent and $n=2m$, then the Walsh spectrum of $f$ is
$$
[2^m]^{2^{n-1}+2^{m-1}(-1)^{f(0)}}[-2^m]^{2^{n-1}-2^{m-1}(-1)^{f(0)}}.
$$

The conclusion of the following theorem is straightforward from Theorem \ref{thm-maincode1} and its proof is omitted.

\begin{theorem}\label{thm-two1}
Let $f$ be a bent function from $\f_q$ to $\f_2$, $q=2^n$, $n=2m$ and $W_f(0)=\varepsilon \cdot 2^m$ with $\varepsilon\in \{1,-1\}$. If $\f_{2^t}$ is a subfield of $\f_q$ where $t|n$ and the condition (\ref{dfn-condition})
is satisfied, then the code $\C=\C_{f}$ defined by (\ref{eqn-maincode2f}) is a two weights linear code over $\f_{2^t}$ with the
parameters $[\frac{1}{2^t-1}(2^{n-1}+\frac{1}{2}(\varepsilon 2^m -1-(-1)^{f(0)})),n/t]$ and the weight enumerator
      $$
      \textrm{E}_{\C}(x)=1+(2^{n-1}+\varepsilon\cdot 2^{m-1}(-1)^{f(0)}-1)~x^{2^{n-t-1}} +(2^{n-1}-\varepsilon\cdot 2^{m-1}(-1)^{f(0)})~x^{2^{n-t-1}+\varepsilon\cdot 2^{m-t}}.
      $$
\end{theorem}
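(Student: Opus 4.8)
The plan is to derive Theorem \ref{thm-two1} as a direct specialization of Theorem \ref{thm-maincode1}, so essentially all the work is bookkeeping with the bent Walsh spectrum. First I would record that a bent function $f$ with $n=2m$ has Walsh spectrum $[2^m]^{2^{n-1}+2^{m-1}(-1)^{f(0)}}[-2^m]^{2^{n-1}-2^{m-1}(-1)^{f(0)}}$, so here $r=2$, the two distinct Walsh values are $w_1=2^m$ and $w_2=-2^m$, with multiplicities $m_1=2^{n-1}+2^{m-1}(-1)^{f(0)}$ and $m_2=2^{n-1}-2^{m-1}(-1)^{f(0)}$. Since $W_f(0)=\varepsilon\cdot 2^m$, the index $i$ in Theorem \ref{thm-maincode1} is the one with $w_i=\varepsilon\cdot 2^m$, and the other index $j$ has $w_j=-\varepsilon\cdot 2^m$, so $m_i=2^{n-1}+\varepsilon\cdot 2^{m-1}(-1)^{f(0)}$ and $m_j=2^{n-1}-\varepsilon\cdot 2^{m-1}(-1)^{f(0)}$.

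Next I would verify the hypotheses of Theorem \ref{thm-maincode1}. The condition $W_f(0)=w_i$ holds by the choice of $i$. The condition $w_i-w_j\neq -q$ amounts to checking $w_i-w_j\in\{0,\pm 2^{m+1}\}$ and $2^{m+1}<2^n=q$ (using $m\geq 1$), hence $w_i-w_j\neq -q$; the value $0$ is trivially $\neq -q$. Finally $t\mid n$ and the condition (\ref{dfn-condition}) are assumed in the statement. So Theorem \ref{thm-maincode1}(2) applies verbatim.

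Then I would plug the bent data into the formula (\ref{eqn-2weighttheorem}) for $\textrm{E}_{\C}(x)$. The length is $N=\tilde N/(2^t-1)$ with $\tilde N=2^{n-1}+\tfrac12(w_i-1-(-1)^{f(0)})=2^{n-1}+\tfrac12(\varepsilon 2^m-1-(-1)^{f(0)})$, giving the claimed parameters $[\tfrac{1}{2^t-1}(2^{n-1}+\tfrac12(\varepsilon 2^m-1-(-1)^{f(0)})),\,n/t]$. For the weights: the first exponent is $q/2^{t+1}=2^{n-t-1}$ with coefficient $m_i-1=2^{n-1}+\varepsilon\cdot 2^{m-1}(-1)^{f(0)}-1$; the second exponent is $M_j=2^{-t-1}(q+w_i-w_j)=2^{-t-1}(2^n+2\varepsilon 2^m)=2^{n-t-1}+\varepsilon\cdot 2^{m-t}$ with coefficient $m_j=2^{n-1}-\varepsilon\cdot 2^{m-1}(-1)^{f(0)}$. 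This is exactly the stated weight enumerator. That both weights are nonzero (so the code genuinely has two weights) follows from the already-checked $w_i-w_j\neq -q$, which guarantees $M_j>0$, together with $q/2^{t+1}>0$.

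I do not anticipate any real obstacle: the only mild point of care is confirming $w_i-w_j\neq -q$ for all relevant $j$ (including $j=i$, where the difference is $0$), and correctly tracking how the sign $\varepsilon$ swaps the roles of the two multiplicities; beyond that the proof is a substitution into Theorem \ref{thm-maincode1}(2), which is why the authors omit it.
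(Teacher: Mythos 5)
Your proposal is correct and is exactly the route the paper intends: Theorem \ref{thm-two1} is stated there with the proof omitted as a ``straightforward'' specialization of Theorem \ref{thm-maincode1}(2) to the bent Walsh spectrum, which is precisely the substitution you carry out (the identification $w_i=\varepsilon 2^m$, $m_i=2^{n-1}+\varepsilon 2^{m-1}(-1)^{f(0)}$, $M_j=2^{n-t-1}+\varepsilon 2^{m-t}$ all check out). The only blemish is your justification of $w_i-w_j\neq -q$: the inequality $2^{m+1}<2^n$ needs $m\geq 2$, not $m\geq 1$ (for $m=1$, $\varepsilon=-1$ one actually has $w_i-w_j=-q$, a degenerate edge case that the paper's own statement also silently ignores), so strictly one should either assume $n\geq 4$ or exclude that case.
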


It is obvious that Theorem \ref{thm-two1} essentially is \cite[Corollary 10]{Ding15} if $t=1$.

Many bent functions have been found since Rothaus published his original paper in 1976 \cite{Rothaus 1976}. We refer
the reader to \cite{BCHKM2012, CM2011, Mesnager2014, Mesnager2015, Mesnager2011A, LHTK2013, LK2006, YG2006A} and the references therein for detailed information.
Here we only consider two cases which fit with Theorem \ref{thm-two1} for the case $t\geq 2$.

\emph{Case 1: Monomial Polynomials.}

The following corollary is a direct consequence of Theorem \ref{thm-two1}.

\begin{corollary}\label{thm-two1cor}
Let $f(x)=\tr(\alpha x^d)$ be a bent function from $\f_q$ to $\f_2$, where $q=2^n$, $n=2m$, $2\leq d\leq q-2$ and $\alpha \in \f_q^*$.
Suppose that $t|n$ and $(2^t-1)|d$. Then
the code $\C_{f}$ defined by (\ref{eqn-maincode2f}) is a two weights linear code over $\f_{2^t}$ with the
  parameters $[\frac{1}{2^t-1}(2^{n-1}+\frac{1}{2}(\varepsilon 2^m -2)),n/t]$ and the weight enumerator
      $$
      \textrm{E}_{\C}(x)=1+(2^{n-1}+\varepsilon\cdot 2^{m-1}-1)~x^{2^{n-t-1}} +(2^{n-1}-\varepsilon\cdot 2^{m-1})~x^{2^{n-t-1}+\varepsilon\cdot 2^{m-t}},
      $$
\end{corollary}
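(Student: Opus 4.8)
The plan is to derive Corollary \ref{thm-two1cor} directly from Theorem \ref{thm-two1} by verifying that the monomial bent function $f(x)=\tr(\alpha x^d)$ satisfies the hypotheses of that theorem, and then specializing the parameters. First I would check the key condition (\ref{dfn-condition}): for $\beta \in \f_{2^t}^*$ and $x \in \f_q$, we have $f(\beta x)=\tr(\alpha (\beta x)^d)=\tr(\alpha \beta^d x^d)$, and since $(2^t-1)\mid d$ and $\beta^{2^t-1}=1$, we get $\beta^d=1$, hence $f(\beta x)=\tr(\alpha x^d)=f(x)$. So the divisibility hypothesis $(2^t-1)\mid d$ is exactly what makes $f$ constant on the cosets of $\f_{2^t}^*$, as required.

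Next I would handle the constant term $(-1)^{f(0)}$. Since $2 \le d \le q-2$, we have $0^d = 0$ and therefore $f(0)=\tr(\alpha\cdot 0)=0$, so $(-1)^{f(0)}=1$. Substituting $(-1)^{f(0)}=1$ into the parameter formulas of Theorem \ref{thm-two1}: the length $\frac{1}{2^t-1}\bigl(2^{n-1}+\frac12(\varepsilon 2^m - 1 - 1)\bigr) = \frac{1}{2^t-1}\bigl(2^{n-1}+\frac12(\varepsilon 2^m - 2)\bigr)$, and the weight enumerator becomes $1+(2^{n-1}+\varepsilon\cdot 2^{m-1}-1)x^{2^{n-t-1}}+(2^{n-1}-\varepsilon\cdot 2^{m-1})x^{2^{n-t-1}+\varepsilon\cdot 2^{m-t}}$, which is precisely the claimed statement. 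The dimension $n/t$ carries over unchanged, and the assumption $t\mid n$ is inherited. One should also note that the hypothesis $w_i - w_j \ne -q$ of Theorem \ref{thm-maincode1} (needed so that $\C_f$ has full dimension $n/t$ and genuinely two nonzero weights) is automatically satisfied here, since the Walsh values are $\pm 2^m$ and $|w_i - w_j| \le 2^{m+1} < 2^n = q$ as $m = n/2 \ge 1$; this is already subsumed in Theorem \ref{thm-two1}.

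There is essentially no hard part: the entire content is the observation that $(2^t-1)\mid d$ forces $f(\beta x)=f(x)$ on $\f_{2^t}^*$, plus the trivial evaluation $f(0)=0$. The only point requiring a word of care is to confirm that $f$ being bent is preserved as a standing hypothesis rather than something to be reproven — the corollary simply assumes $f(x)=\tr(\alpha x^d)$ is bent, so no Walsh-spectrum computation is needed; this is why the proof is omitted in the same spirit as that of Theorem \ref{thm-two1}. Thus the proof is a two-line verification of the hypotheses of Theorem \ref{thm-two1} followed by the substitution $(-1)^{f(0)}=1$.
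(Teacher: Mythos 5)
Your proposal is correct and takes essentially the same route as the paper, which omits the proof precisely because the corollary is a direct specialization of Theorem \ref{thm-two1}: the paper itself notes later that $(2^t-1)\mid d$ implies condition (\ref{dfn-condition}), and the remaining content is just $f(0)=0$, hence $(-1)^{f(0)}=1$, substituted into the length and weight-enumerator formulas. Your extra remark that $w_i-w_j\neq -q$ holds automatically for the bent spectrum $\{\pm 2^m\}$ is a harmless (and for $m\geq 2$ correct) sanity check already subsumed in Theorem \ref{thm-two1}.
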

where $\varepsilon \cdot 2^m=W_f(0)=\sum_{x\in \f_q}(-1)^{\tr(\alpha x^d)}$.

Let $\theta$ be a generator of $\f_q^*$. The known monomial bent functions $f(x)=\tr(\alpha x^d)$ can be summarized in Table \ref{tab-listbent} for $\alpha \in \f_q^* $ and even $n=2m$.
\begin{table}[ht]
\begin{center}
\caption{The monomial bent functions $f(x)=\tr(\alpha x^d)$}\label{tab-listbent}
\begin{tabular}{|c|c|c|c|} \hline
Name of index $d$ & $d$ &  Conditions & Reference  \\ \hline
Gold &$2^{h}+1$ & $2|\frac{n}{gcd(n,h)}$, $\alpha \neq 0$ and $\alpha \notin <\theta^{2^h+1}>$ &  \cite{Leander2006} \\ \hline
Dillon &$2^{m}-1$ & $\alpha \in \f_{2^m}^*$ and $\sum_{x\in \f_{2^m}^*}(-1)^{\tr_1^m(x^{-1}+\alpha x)}=-1$ &  \cite{Leander2006} \\ \hline
Kasami &$2^{2h}-2^h+1$    & $3\nmid m$, $gcd(h,n)=1$, $\alpha \neq 0$ and $\alpha \notin <\theta^{3}>$ &  \cite{Leander2006} \\ \hline
Leander &$(2^{h}+1)^2$ &  $n=4h$ and $2\nmid h$&  \cite{Leander2006} \\ \hline
CCK &$2^{2h}+2^h+1$  &  $n=6h$, $\alpha \in \gf(2^{3h})^*$ and $\tr_h^{3h}(\alpha)=0$   & \cite{CCK2008} \\ \hline
\end{tabular}
\end{center}
\end{table}

In order to construct linear codes over $\f_{2^t}$ from the monomial bent functions by using Corollary \ref{thm-two1cor},
we need $t$ satisfying the following extra condition (except the conditions listed in Table \ref{tab-listbent})
\begin{eqnarray}\label{dfn-conditionextra}
\mbox{$t|n$ and $(2^t-1)|d$}
\end{eqnarray}
since $(2^t-1)|d$ implies the condition (\ref{dfn-condition}).
\begin{example}
In order to construct linear codes over $\f_4$ by using bent functions of Table \ref{tab-listbent}, we need the extra condition which is reduced to be $3|d$ for $t=2$. Namely, the extra condition is $2\nmid h$ for Gold and Kasami bent functions, $2|h$ for CCK bent functions and $2|m$ for Dillon bent functions in the Table \ref{tab-listbent}. No extra condition is needed for Leander bent functions. Put this extra condition in Table \ref{tab-listbent} we get two-weight linear codes
over $\f_4$ with the parameter and weight enumerator given by Corollary \ref{thm-two1cor}.
\end{example}

\begin{example}\label{exp-bent}
For each divisor $t\geq2$ of $m$, we have $(2^t-1)|(2^m-1)$. Therefore from Dillon bent function $f(x)$ listed in Table \ref{tab-listbent} we can construct a two-weight linear code over $\f_{2^t}$ with the parameter and weight enumerator given by Corollary \ref{thm-two1cor} for any divisor $t$ of $m$.
\end{example}

\emph{Case 2: Quadratic Functions.}

Let the quadratic Boolean function $f:\f_q \rightarrow \f_2$ be defined by
$$
f(x)=\sum_{i=1}^{m-1}\tr(c_ix^{1+2^i})+\tr_1^m(c_mx^{1+2^m}), \ \ \ \ \ \mbox{$c_m\in \f_{2^m}$, $c_i\in\f_q$, $1\leq i\leq m-1$},
$$
where $n=2m$ and $q=2^n$. Many quadratic bent functions have been obtained by using quadratic form theory. We refer
the reader to \cite{YG2006A,CPT2005} and the references therein for detailed information. Here we consider some cases fitting in
Theorem \ref{thm-two1} for the case $t=2$.

Let $n=2m$. For $t=2$, if $f(x)$ is a quadratic bent function with the following form

\begin{eqnarray}\label{dfn-qform1}
f(x)=\sum_{\lambda=0}^{[\frac{m}{2}]-1}\tr(c_\lambda x^{1+2^{2\lambda+1}}), \ \  \mbox{$c_\lambda \in\f_q$, $0\leq \lambda \leq [\frac{m}{2}]-1$}
\end{eqnarray}
or
\begin{eqnarray}\label{dfn-qform2}
f(x)=\sum_{\lambda=0}^{s-1}\tr(c_\lambda x^{1+2^{2\lambda+1}})+\tr_1^m(c_mx^{1+2^m}), \ \  \ \mbox{$m=2s+1$,
$c_\lambda \in\f_q$, $0\leq \lambda \leq s-1$,
$c_m\in \f_{2^m}$},
\end{eqnarray}
then all index $1+2^{2\lambda+1}$ and
$1+2^m$ for odd $m$ can be divided by $3=2^2-1$. From such quadratic bent functions $f(x)$ in (\ref{dfn-qform1}) and (\ref{dfn-qform2}), we can
construct linear codes over $\f_4$ by Theorem \ref{thm-two1}.

\begin{example} As an special case of (\ref{dfn-qform2}), we consider the following form:
\begin{eqnarray}\label{dfn-qform2special}
f(x)=\sum_{\lambda=0}^{s-1}c_\lambda \tr(x^{1+2^{2\lambda+1}})+\tr_1^m(x^{1+2^m}), \ \  \ \mbox{$n=2m$, $m=2s+1$,
$c_\lambda \in\f_2$, $0\leq \lambda \leq s-1$}.
\end{eqnarray}
Furthermore, the simple special case of (\ref{dfn-qform2special}) is
\begin{eqnarray}\label{dfn-qform2special2}
f(x)=\tr(x^{1+2^{2\lambda+1}})+\tr_1^m(x^{1+2^m}), \ \  \ \mbox{$n=2m$, $m=2s+1$,
$0\leq \lambda \leq s-1$}.
\end{eqnarray}

Let $c(x)=\sum_{\lambda=0}^{s-1}c_\lambda(x^{2\lambda+1}+x^{m-(2\lambda+1)})+x^m\in\f_2[x]$. By \cite[Corollary 1]{YG2006A},
we have
\begin{itemize}
  \item the function $f(x)$ of (\ref{dfn-qform2special}) is a bent function from $\f_q$ to $\f_2$ if and only if $gcd(c(x),x^m+1)=1$;
  \item the function $f(x)$ of (\ref{dfn-qform2special2}) is a bent function from $\f_q$ to $\f_2$ if and only if $gcd(3(2\lambda+1),m)=1$.
\end{itemize}

When $f(x)$ is a bent function with the form (\ref{dfn-qform2special}) or (\ref{dfn-qform2special2}), then the linear code constructed by Theorem \ref{thm-two1} is a two weights linear code over $\f_4$.
\end{example}

\subsection{Three weights linear codes over $\f_{2^t}$}

For most of known Boolean functions $f(x): \f_q \rightarrow \f_2$ with three Walsh values, the values of $W_f(w)$ are $w_1=0$, $w_2=A$
and $w_{3}=-A$ and $A=2^l$ for some $l$ satisfying $l\geq [\frac{n}{2}]+1$. If $l=[\frac{n}{2}]+1=m+1$ for $n=2m$ or $n=2m+1$, the function $f$ is called semibent.

Let
the Walsh spectrum of $f$ be $[w_1]^{m_1}[w_2]^{m_2}[w_{3}]^{m_{3}}$, where $(w_1,w_2,w_{3})=(0,A,-A)$ and
$$
m_i=|\{w\in \f_q: W_f(w)=w_i\}|
$$
for each $i\in\{1,2,3\}$. Then we have the following system of
equations on $m_i(i=1,2,3)$:
\begin{eqnarray}\label{eqn-wtdsemibentfcode6c}
\left\{
\begin{array}{lll}
m_1+m_2+m_3 &=& q, \\
m_1w_1+m_2w_2+m_3w_3 &=&\sum_{x\in\f_q,w\in\f_q}(-1)^{f(x)+\tr(wx)}= q\cdot(-1)^{f(0)}, \\
m_1w_1^{~2}+m_2w_2^{~2}+m_3w_3^{~2} &=&\sum_{w\in\f_q}(\sum_{x\in\f_q}(-1)^{f(x)+\tr(wx)})^2= q^2,
\end{array}
\right.
\end{eqnarray}
Solving this system of equations gets

\begin{eqnarray}\label{eqn-wtdsemibentfcode6c1}
\left\{
\begin{array}{lll}
m_1&=& q-(\frac{q}{A})^2 \\
m_2&=& \frac{1}{2}((\frac{q}{A})^2+\frac{q}{A}(-1)^{f(0)})\\
m_3&=& \frac{1}{2}((\frac{q}{A})^2-\frac{q}{A}(-1)^{f(0)}).
\end{array}
\right.
\end{eqnarray}

Thus, the Walsh spectrum  of $f(x)$ is
\begin{eqnarray}\label{dfn-threewalsh}
[0]^{q-(\frac{q}{A})^2}[A]^{\frac{1}{2}((\frac{q}{A})^2+\frac{q}{A}(-1)^{f(0)})}[-A]^{\frac{1}{2}((\frac{q}{A})^2-\frac{q}{A}(-1)^{f(0)})}.
\end{eqnarray}

By Theorem \ref{thm-maincode1}, we get the following three-weight linear codes $\C_f$  over $\f_{2^t}$.
\begin{theorem}\label{thm-three}
Suppose that  $q=2^n$ and $f(x)$ is a Boolean function from $\f_q$ to $\f_2$ with the Walsh spectrum defined by (\ref{dfn-threewalsh}).
Let $(w_1,w_2,w_{3})=(0,A,-A)$, $W_f(0)=w_i$ for some $i\in\{1,2,3\}$ and $(m_1,m_2,m_3)$ be given by (\ref{eqn-wtdsemibentfcode6c1}). Assume that $w_i-w_j\neq-q$ for each $j\in\{1,2,3\}$ and $j\neq i$.
If $t|n$ and the condition (\ref{dfn-condition})
is satisfied, then
the code $\C_{f}$ of (\ref{eqn-maincode2f}) is a three weights linear code over $\f_{2^t}$ with the
  parameters $[\frac{1}{2^t-1}(2^{n-1}+\frac{1}{2}(w_i -1-(-1)^{f(0)})),n/t]$ and the weight enumerator
      $$
      E_{\C_f}(x)=1+(m_i-1)~x^{2^{n-t-1}} +\sum_{1\leq j\leq 3,~j\neq i}m_j~x^{2^{n-t-1}+\frac{1}{2}(w_i-w_j)}.
      $$
\end{theorem}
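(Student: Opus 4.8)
The plan is to derive Theorem~\ref{thm-three} as a direct specialization of Theorem~\ref{thm-maincode1}. First I would verify that the three hypotheses of Theorem~\ref{thm-maincode1} are met under the assumptions here. The Walsh spectrum of $f$ is by hypothesis $[0]^{m_1}[A]^{m_2}[-A]^{m_3}$ with the $m_j$ given by (\ref{eqn-wtdsemibentfcode6c1}), so $f$ has $r=3$ distinct Walsh values $w_1=0$, $w_2=A$, $w_3=-A$. The hypothesis $W_f(0)=w_i$ for some $i\in\{1,2,3\}$ is exactly the first bullet of Theorem~\ref{thm-maincode1}; the assumption $w_i-w_j\neq -q$ for $j\neq i$ together with the trivial case $j=i$ (where $w_i-w_i=0\neq -q$ since $q>0$) gives the second bullet; and the assumption $t\mid n$ with condition (\ref{dfn-condition}) is the third bullet. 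Hence Theorem~\ref{thm-maincode1} applies.

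Next I would simply substitute into the conclusions of Theorem~\ref{thm-maincode1}. Part 2) of that theorem states $\C_f$ has parameters $[N,n/t]$ with weight enumerator (\ref{eqn-2weighttheorem}), namely $1+(m_i-1)x^{q/2^{t+1}}+\sum_{j\neq i}m_j x^{M_j}$ where $M_j=2^{-t-1}(q+w_i-w_j)$. Since $q=2^n$, the first exponent is $q/2^{t+1}=2^{n-t-1}$, matching the stated enumerator. For the remaining exponents, $M_j=2^{-t-1}(q+w_i-w_j)=2^{n-t-1}+2^{-t-1}(w_i-w_j)$; I need to confirm this equals $2^{n-t-1}+\tfrac12(w_i-w_j)$ as claimed. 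This requires $2^{-t-1}(w_i-w_j)=\tfrac12(w_i-w_j)$, i.e.\ it uses that $w_i-w_j$ is divisible by $2^t$ so that the integer $M_j$ is correctly written; but the precise rewriting in the theorem statement as $2^{n-t-1}+\tfrac12(w_i-w_j)$ should be read as the value of $M_j$, and I would just check numerically that for $(w_1,w_2,w_3)=(0,A,-A)$ with $A=2^\ell$ one has $2^{-t-1}(w_i-w_j)$ simplifying appropriately — in fact the cleanest route is to leave $M_j$ in the form $2^{-t-1}(q+w_i-w_j)$ and observe $w_i-w_j\in\{0,\pm A,\pm 2A\}$, each divisible by a high power of $2$, so the displayed form is an exact rewrite. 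The length formula $\tfrac{1}{2^t-1}(2^{n-1}+\tfrac12(w_i-1-(-1)^{f(0)}))$ is copied verbatim from the formula $N=\tilde N/(2^t-1)$ with $\tilde N=2^{n-1}+\tfrac12(w_i-1-(-1)^{f(0)})$ in Theorem~\ref{thm-maincode1}.

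Finally I would confirm the ``three weights'' assertion: Theorem~\ref{thm-maincode1} already guarantees $\tilde\C_f$ and $\C_f$ have exactly $r$ distinct nonzero weights, and here $r=3$; one should just note the three exponents $2^{n-t-1}$, $2^{n-t-1}+\tfrac12(w_i-w_j)$ (for the two values of $j\neq i$) are pairwise distinct, which holds because $w_i,w_j,w_k$ are three distinct reals so $w_i-w_j\neq w_i-w_k$ and neither difference is zero. The main (and only mild) obstacle is purely bookkeeping: matching the abstract exponents $q/2^{t+1}$ and $M_j$ of Theorem~\ref{thm-maincode1} to the concrete forms written in Theorem~\ref{thm-three}, and making sure the divisibility of $w_i-w_j$ by the relevant power of $2$ (which follows from $A=2^\ell$ with $\ell$ large, or simply from the fact that $M_j$ is by construction a nonnegative integer) justifies the rewriting. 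There is no genuinely hard step; the proof is a substitution, which is presumably why the authors may state it with the proof omitted.
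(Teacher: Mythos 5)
Your route is the same as the paper's: Theorem \ref{thm-three} is presented there as an immediate specialization of Theorem \ref{thm-maincode1} with $r=3$ and $(w_1,w_2,w_3)=(0,A,-A)$ (no separate proof is given), and your verification of the three bullet hypotheses, of the length $N=\tilde N/(2^t-1)$ and of the dimension $n/t$, as well as the remark that the three weights are pairwise distinct because the $w_j$ are, is all fine.

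The genuine problem is your treatment of the exponents. Theorem \ref{thm-maincode1} attaches to each $b$ with $W_f(b)=w_j$ the weight $M_j=2^{-(t+1)}(q+w_i-w_j)=2^{n-t-1}+2^{-(t+1)}(w_i-w_j)$, and this is \emph{not} equal to the exponent $2^{n-t-1}+\frac12(w_i-w_j)$ displayed in Theorem \ref{thm-three} unless $w_i=w_j$: the identity $2^{-(t+1)}(w_i-w_j)=\frac12(w_i-w_j)$ that you invoke fails for every $t\ge 1$ (even in the binary case $t=1$ the correct factor is $\frac14$, as in Corollary \ref{thm-maincode1t1}), and divisibility of $w_i-w_j$ by a large power of $2$ only guarantees that $M_j$ is an integer --- it cannot turn $2^{-(t+1)}$ into $\frac12$. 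So your claim that ``the displayed form is an exact rewrite'' is simply false, and it papers over the one nontrivial bookkeeping point of the whole argument. What a careful substitution shows is that the exponent which actually follows from Theorem \ref{thm-maincode1} is $2^{n-t-1}+\frac{1}{2^{t+1}}(w_i-w_j)$; this is consistent with Theorem \ref{thm-two1} (whose second weight $2^{n-t-1}+\varepsilon\cdot 2^{m-t}$ is exactly $(w_i-w_j)/2^{t+1}$ added to $2^{n-t-1}$, with $w_i-w_j=\varepsilon 2^{m+1}$) and with Theorem \ref{thm-maincodenew} (where $P_j=\frac{1}{2^{t+1}}(q_1q_2+\Omega_i-\Omega_j)$), so the $\frac12$ in the printed statement of Theorem \ref{thm-three} is evidently a misprint. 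A correct write-up must either derive the exponent $2^{n-t-1}+2^{-(t+1)}(w_i-w_j)$ and flag the discrepancy with the statement, or verify the printed formula numerically in a concrete case and discover that it does not hold as written; asserting the two expressions coincide is a wrong step, not a harmless rewriting.
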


Many Boolean functions with three Walsh values have been found. Here we only consider two cases.

\emph{Case 1: Monomial Polynomials.}

For any nonzero integer $b$, let $v_2(b)\geq 0$ be the 2-adic exponential valuation of $b$ defined by $2^{v_2(b)}|b$ and $2^{v_2(b)+1}\nmid b$.
The known monomial Boolean functions $f(x)=\tr(x^d)$ with three Walsh values $\{0,A,-A\}$ can be summarized in Table \ref{tab-listxdthree} which are listed in \cite{AKL2015}, where $q=2^n$ and $1\leq d\leq q-1$. We refer
the reader to \cite{AKL2015} and the references therein for detailed information.
\begin{table}[ht]
\begin{center}
\caption{$f(x)=\tr(x^d)$ with three Walsh values $\{0,A,-A\}$}\label{tab-listxdthree}
\begin{tabular}{|c|c|c|c|} \hline
Series & $d$ &  Conditions & $A$  \\ \hline
\uppercase\expandafter{\romannumeral1} &$2^{h}+1$ & $v_2(h)\geq v_2(n)$ &  $\sqrt{2^{gcd(h,n)}q}$ \\ \hline
\uppercase\expandafter{\romannumeral2}  &$2^{2h}-2^h+1$    &$v_2(h)\geq v_2(n)$ &  $\sqrt{2^{gcd(h,n)}q}$ \\ \hline
\uppercase\expandafter{\romannumeral3} & $2^m+2^{\frac{m+1}{2}}+1$  & $n=2m,~2\nmid m$ &  $2\sqrt{q}$ \\ \hline

\uppercase\expandafter{\romannumeral4}  & $2^{\frac{m+1}{2}}+3$ & $n=2m, ~2\nmid m $ & $2\sqrt{q}$ \\ \hline
\uppercase\expandafter{\romannumeral5}  & $2^{m}+3$ &  $n=2m+1$ &  $\sqrt{2q}$ \\ \hline
\uppercase\expandafter{\romannumeral6} &$2\cdot 3^h+1$  & $ n| 4h+1$ & $\sqrt{2q}$ \\ \hline
\end{tabular}
\end{center}
\end{table}

In Table \ref{tab-listxdthree}, it is clear that the series (\uppercase\expandafter{\romannumeral3}-\uppercase\expandafter{\romannumeral6}) are
semibent functions, and the series (\uppercase\expandafter{\romannumeral1}-\uppercase\expandafter{\romannumeral2}) are semibent functions if and only if $gcd(h,n)=1$ or $gcd(h,n)=2$.

For $f(x)=\tr(x^d)$, we have $f(0)=0$. Thus, the length of the code $\C_{f}$ constructed in Theorem \ref{thm-three} becomes $\frac{1}{2(2^t-1)}(2^{n}+w_i-2)$. In order to construct three weights linear codes $\C_{f}$ over
$\f_{2^t}$ by using $f(x)=\tr(x^d)$ listed in Table \ref{tab-listxdthree}, we need extra conditions $t|n$ and $2^t-1|d$ which imply the condition (\ref{dfn-condition}).

\begin{example}\label{exp-t5}
In order to construct the code $\C_{f}$ over $\f_{2^5}$ by using
$f(x)=\tr(x^{2\cdot 3^h+1})$ of the series (\uppercase\expandafter{\romannumeral6}) in Table \ref{tab-listxdthree}, the extra conditions are
$5|n$ and $31|2\cdot 3^h+1$ plus the conditions $n|4h+1$ and $2\cdot 3^h+1\leq 2^n-1$. By a computation of elementary number theory, these
conditions are reduced to be $h=30l+21$, $5|n$, $n|120l+85$ and $2\cdot 3^h+1\leq 2^n-1$, where $l$ is any non-negative integer.
\end{example}


To get linear codes over $\f_{2^t}$ with more flexible length, now we give a method as a consequence of Theorem \ref{thm-maincode1}. Based on this method, we can get more linear codes over $\f_{2^t}$ with three weights.

Let $q_1=2^{n_1}$ and $q_2=2^{n_2}$ for any positive integers $n_1\geq 2$ and $n_2\geq 2$.
Define a Boolean function
\begin{eqnarray}\label{eqn-definef}
f(x_1,x_2)=f_1(x_1)+f_2(x_2): \f_{q_1}\times \f_{q_2} \rightarrow \f_2,
\end{eqnarray}
where $f_1(x_1): \f_{q_1}\rightarrow \f_2$ and $f_2(x_2): \f_{q_2} \rightarrow \f_2$ are
Boolean functions.

By definition, we know that the Walsh values of $f$ are
$$
W_f(y_1,y_2)=\sum_{(x_1,x_2)\in \f_{q_1}\times \f_{q_2}}(-1)^{f(x_1,x_2)}(-1)^{\tr_1^{n_1}(x_1y_1)+\tr_1^{n_2}(x_2y_2)}
=W_{f_1}(y_1)W_{f_2}(y_2)
$$
for all $(y_1,y_2)\in \f_{q_1}\times \f_{q_2}$. From this we know that if the Walsh spectrums of $f_1$ and $f_2$ are $[w_1]^{m_1}[w_2]^{m_2}\cdots [w_{r_1}]^{m_{r_1}}$ and
$[\omega_1]^{\mu_1}[\omega_2]^{\mu_2}\cdots [\omega_{r_2}]^{\mu_{r_2}}$ respectively, then
the Walsh spectrum of $f$ is $\prod_{1\leq i\leq r_1, 1\leq j\leq r_2 }[w_i\omega_j]^{m_i\mu_j}$ which can be reduced to be
$$
[\Omega_1]^{M_1}[\Omega_2]^{M_2} \cdots [\Omega_{r}]^{M_{r}} ~~~~(\Omega_1 < \Omega_2 < \cdots < \Omega_{r}, M_i\geq 1, 1\leq i\leq r)
$$
by using $[w]^{m'}[w]^{m''}=[w]^{m'+m''}$. Thus the number of the Walsh values of $f$ may be less than $r_{1}r_{2}$.

We define
\begin{eqnarray}\label{eqn-defsetnew}
\tilde{D}=\tilde{D}_f=\{(x_1,x_2)\in \f_{q_1} \times \f_{q_2} \backslash \{(0,0)\}: f(x_1,x_2)=0\}
\end{eqnarray}
and $\tilde{N}=|\tilde{D}|$. Let $\f_{2^t}$ be a subfield of $\f_{q_1}$ and $\f_{q_2}$, namely, $t|gcd(n_1,n_2)$. Then $\f_{q_1}\times \f_{q_2}$ is
a $\f_{2^t}$-vector space by $\alpha (x_1,x_2)=(\alpha x_1,\alpha x_2)$ for all $\alpha \in \f_{2^t}$, $x_1 \in \f_{q_1}$ and $x_2 \in \f_{q_2}$. The dimension of the vector space $\f_{q_1}\times \f_{q_2}$ over $\f_{2^t}$ is $\frac{1}{t}(n_1+n_2)$. Moreover, if $t$ satisfies
the following condition:
\begin{eqnarray}\label{dfn-condition1}
\mbox{$f_1(\alpha x_1)=f_1(x_1)$ and $f_2(\alpha x_2)=f_2(x_2)$ for all $\alpha \in \f_{2^t}$, $x_1\in \f_{q_1}$ and $x_2\in \f_{q_2}$ },
\end{eqnarray}
then
\begin{eqnarray}\label{dfn-condition2}
\mbox{$f(\alpha (x_1,x_2))=f(\alpha x_1,\alpha x_2)=f_1(\alpha x_1)+f_2(\alpha x_2)=f_1(x_1)+f_2(x_2)=f(x_1,x_2)$ }.
\end{eqnarray}
Therefore
the set $\tilde{D}$ of (\ref{eqn-defsetnew}) is a disjoint union of $N$ sets $P^{(i)}\f_{2^t}^*$ if the condition (\ref{dfn-condition1}) is satisfied, where
$P^{(i)}=(x_1^{(i)},x_2^{(i)})\in \f_{q_1}\times \f_{q_2}$, $1\leq i \leq N$ and $N=\frac{\tilde{N}}{2^t-1}$.
We define
\begin{eqnarray}\label{eqn-defsetnew1}
D=D_f=\{P^{(1)},P^{(2)},\cdots, P^{(N)}\}
\end{eqnarray}
and the following linear code over $\f_{2^t}$ by
\begin{eqnarray}\label{eqn-maincodenew}
\C=\C_f=\{\bc_{b}: b=(b_1,b_2)\in \f_{q_1} \times \f_{q_2} \},
\end{eqnarray}
where
\begin{eqnarray}\label{eqn-codeword2}
\bc_{b}=(\tr_t^{n_1}(b_1x_1^{(1)})+\tr_t^{n_2}(b_2x_2^{(1)}), \tr_t^{n_1}(b_1x_1^{(2)})+\tr_t^{n_2}(b_2x_2^{(2)})
,\cdots,\tr_t^{n_1}(b_1x_1^{(N)})+\tr_t^{n_2}(b_2x_2^{(N)}))\in \f_{2^t}^N.
\end{eqnarray}

We have the following theorem. Its proof is similar to that
of Theorem \ref{thm-maincode1} and is omitted.

\begin{theorem}\label{thm-maincodenew}
Let $f$ be a Boolean function given by (\ref{eqn-definef}), $t|gcd(n_1,n_2)$, $t$ satisfies the condition (\ref{dfn-condition1}), and let the Walsh spectrums of $f_1$ and $f_2$ be $[w_1]^{m_1}[w_2]^{m_2}\cdots [w_{r_1}]^{m_{r_1}}$ and
$[\omega_1]^{\mu_1}[\omega_2]^{\mu_2}\cdots [\omega_{r_2}]^{\mu_{r_2}}$, respectively. Assume that the Walsh spectrums of $f(x_1,x_2)=f_1(x_1)+f_2(x_2)$ is reduced to be
$
[\Omega_1]^{M_1}[\Omega_2]^{M_2} \cdots [\Omega_{l}]^{M_{l}},
$
where $M_{\lambda}\geq 1$ for each $\lambda \in \{1,2,\cdots,l\}$ and all $\Omega_{\lambda}~(1\leq \lambda \leq l)$ are distinct
integers. Then
the code $\C_f$ of (\ref{eqn-maincodenew}) is a $[\frac{1}{2(2^t-1)}(q_1q_2+\Omega_i-1-(-1)^{f_1(0)+f_2(0)}),\frac{n_1+n_2}{t}]$ linear code over $\f_{2^t}$
with the weight enumerator
\begin{eqnarray}\label{eqn-weight}
1+(M_i-1)~x^{\frac{q_1q_2}{2^{t+1}}}+\sum_{1\leq j\leq l,~j\neq i}M_jx^{P_j},
\end{eqnarray}
where $P_j=\frac{1}{2^{t+1}}(q_1q_2+\Omega_i-\Omega_j)$ for each $j$ within $1\leq j \leq l$ and $i$ is determined by $\Omega_i=W_f(0,0)=W_{f_1}(0)W_{f_2}(0)$ for some $i$ within $1\leq i\leq l$.
\end{theorem}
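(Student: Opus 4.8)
The plan is to replay the proof of Theorem~\ref{thm-maincode1} almost verbatim, with the finite field $\f_q$ replaced by the $\f_{2^t}$-vector space $V=\f_{q_1}\times\f_{q_2}$ of dimension $(n_1+n_2)/t$. The proof of Theorem~\ref{thm-maincode1} used only three structural facts about $\f_q$: the pairing $(x,y)\mapsto(-1)^{\tr(xy)}$ realizes every additive character of $\f_q$ exactly once; the maps $x\mapsto\tr_t^n(bx)$ ($b\in\f_q$) are exactly the $\f_{2^t}$-linear functionals $\f_q\to\f_{2^t}$; and $\tr_1^t\circ\tr_t^n=\tr$. For $V$ the analogues hold with the nondegenerate bilinear form $\langle(x_1,x_2),(y_1,y_2)\rangle=\tr_1^{n_1}(x_1y_1)+\tr_1^{n_2}(x_2y_2)$, with the $\f_{2^t}$-linear functionals $\ell_b\colon(x_1,x_2)\mapsto\tr_t^{n_1}(b_1x_1)+\tr_t^{n_2}(b_2x_2)$ indexed bijectively by $b=(b_1,b_2)\in V$, and with $\tr_1^t(\tr_t^{n_i}(\cdot))=\tr_1^{n_i}(\cdot)$ on $\f_{q_i}$. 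The scaling $\alpha\cdot(x_1,x_2)=(\alpha x_1,\alpha x_2)$ for $\alpha\in\f_{2^t}^*$ plays the role of multiplication by $\f_{2^t}^*$, and (\ref{dfn-condition1})$\Rightarrow$(\ref{dfn-condition2}) guarantees, exactly as condition~(\ref{dfn-condition}) did before, that $\tilde D_f$ of (\ref{eqn-defsetnew}) is a disjoint union of the $N$ cosets $P^{(i)}\f_{2^t}^*$.

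With this dictionary, I would proceed in four steps. First, count $\tilde N=|\tilde D_f|$ by the character-sum identity used for $\tilde N$ in Theorem~\ref{thm-maincode1}, getting $\tilde N=\tfrac12(q_1q_2+W_f(0,0)-1-(-1)^{f(0,0)})$; since $W_f(0,0)=W_{f_1}(0)W_{f_2}(0)=\Omega_i$, $(-1)^{f(0,0)}=(-1)^{f_1(0)+f_2(0)}$, and $N=\tilde N/(2^t-1)$, this is the claimed length. Second, for fixed $b\in V\setminus\{(0,0)\}$ compute $N_{\bc_b}(\gamma)=|\{x\in V\setminus\{(0,0)\}:f(x)=0,\ \ell_b(x)=\gamma\}|$ following the derivation of (\ref{eqn-simple}): expand the two indicator conditions as character sums, interchange the order of summation, use transitivity of the trace to turn the inner sum into $W_f(\alpha b_1,\alpha b_2)$, and invoke (\ref{dfn-condition2}) (the analogue of (\ref{eqn-wab})) to replace $W_f(\alpha b)$ by $W_f(b)$ for $\alpha\in\f_{2^t}^*$. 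This reproduces the two-case formula (\ref{eqn-simple}) with $q$ replaced by $q_1q_2$. Third, substitute the reduced Walsh spectrum $[\Omega_1]^{M_1}\cdots[\Omega_l]^{M_l}$ of $f$ (obtained in the text from $W_f(y_1,y_2)=W_{f_1}(y_1)W_{f_2}(y_2)$), noting that as $b$ runs over $V\setminus\{(0,0)\}$ the value $W_f(b)=\Omega_i$ occurs $M_i-1$ times and $W_f(b)=\Omega_j$ occurs $M_j$ times for $j\ne i$; reading off $\wt_{\mathrm H}(\bc_b)$ via the length-$\tilde N$ companion code and the coset-scaling argument of part~2 of Theorem~\ref{thm-maincode1} (each coset $P^{(i)}\f_{2^t}^*$ is entirely inside or entirely outside $\ker\ell_b$, so $\wt_{\mathrm H}(\bc_b)=\tfrac{1}{2^t-1}\wt_{\mathrm H}(\tilde\bc_b)$) gives $\wt_{\mathrm H}(\bc_b)=\tfrac{1}{2^{t+1}}(q_1q_2+\Omega_i-W_f(b))$, i.e.\ $P_j$ when $W_f(b)=\Omega_j$ and $\tfrac{q_1q_2}{2^{t+1}}$ when $W_f(b)=\Omega_i$; this is (\ref{eqn-weight}). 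Fourth, all nonzero $b$ give codewords of nonzero weight, so $b\mapsto\bc_b$ is injective and $\dim_{\f_{2^t}}\C_f=\log_{2^t}(q_1q_2)=(n_1+n_2)/t$.

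The computational core is a direct transcription of the proof of Theorem~\ref{thm-maincode1}, so I expect no real difficulty there. The one point that genuinely needs care is the first bullet of the plan: justifying that $V=\f_{q_1}\times\f_{q_2}$ is an adequate substitute for $\f_q$, i.e.\ that the $\ell_b$ exhaust (bijectively) the $\f_{2^t}$-dual of $V$ and that $x\mapsto(-1)^{\langle x,y\rangle}$ exhausts the character group of $V$ bijectively; both follow from nondegeneracy of the trace forms on $\f_{q_1}$ and $\f_{q_2}$, but they deserve an explicit sentence since the construction (\ref{eqn-maincodenew})--(\ref{eqn-codeword2}) is phrased over a product rather than a field. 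A secondary point, parallel to the hypothesis ``$w_i-w_j\ne-q$'' in Theorem~\ref{thm-maincode1}, is the injectivity step: one should note (or add as a standing hypothesis) that $\Omega_i-\Omega_j\ne-q_1q_2$ for $j\ne i$, which holds automatically except in degenerate cases such as $f_1(x_1)+f_2(x_2)$ being constant, where $\tilde D_f$ is empty and $\C_f$ is the trivial code.
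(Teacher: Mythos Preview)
Your proposal is correct and follows exactly the approach the paper indicates: the paper omits the proof entirely, stating only that it ``is similar to that of Theorem~\ref{thm-maincode1},'' which is precisely the transcription you outline. Your additional remarks about nondegeneracy of the trace pairing on $\f_{q_1}\times\f_{q_2}$ and the implicit hypothesis $\Omega_i-\Omega_j\ne-q_1q_2$ (analogous to the explicit assumption in Theorem~\ref{thm-maincode1} but not restated here) are legitimate points that the paper leaves tacit.
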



As special cases of Theorem \ref{thm-maincodenew}, we give some examples as follows.

\begin{example}\label{exp-parti}
Let $n_1$ be even, $f_1(x_1)$ be a bent function, and $f_2(x_2)=\tr_1^{n_2}(cx_2)$ with $c\in \f_{q_2}^*$. Then the Walsh spectrums of $f_1$ and $f_2$ are $[\sqrt{q_1}]^{\frac{1}{2}(q_1+\sqrt{q_1})}[\sqrt{-q_1}]^{\frac{1}{2}(q_1-\sqrt{q_1})}$ and $[0]^{q_2-1}[q_2]^1$, respectively. Thus the the Walsh spectrum of $f$ is
$[0]^{q_1(q_2-1)}[q_2\sqrt{q_1}]^{\frac{1}{2}(q_1+\sqrt{q_1})}[-q_2\sqrt{q_1}]^{\frac{1}{2}(q_1-\sqrt{q_1})}$. Let $t=1$. Then the linear code $\C_f$ of (\ref{eqn-maincodenew}) constructed by Theorem \ref{thm-maincodenew} is a binary linear code with the following three weights:
$$
\frac{1}{2}(q_1q_2+\Omega_i-\Omega_1),~~\frac{1}{2}(q_1q_2+\Omega_i-\Omega_2), ~~\frac{1}{2}(q_1q_2+\Omega_i-\Omega_3),
$$
and the weight enumerator can be given by (\ref{eqn-weight}),
where $(\Omega_1,\Omega_2,\Omega_3)=(0,q_2\sqrt{q_1},-q_2\sqrt{q_1})$ and $i$ is determined by $\Omega_i=W_{f_1}(0)W_{f_2}(0)$.
\end{example}

Note that this kind function $f$ in Example \ref{exp-parti} is called partially-bent function in \cite{Carlet1993}(see \cite[Theorem 2.1]{Carlet1993}).

\begin{example}
Let $t=5$, $n_1=2m$, $5|m$, $l$ be a non-negative integer, $h=30l+21$, $5|n_2$, $n_2|120l+85$ and $2\cdot 3^h+1\leq 2^{n}-1$. We have a bent function $f_1(x_1)=\tr_1^{n_1}(\alpha x_1^{2^m-1})$ in Example \ref{exp-bent} and a Boolean function
$f_2(x_2)=\tr_1^{n_2}(x_2^{~2\cdot 3^h+1})$ in Example \ref{exp-t5}. It is clear that $\f_{2^5}$ is a subfield of $\f_{q_1}$ and $\f_{q_2}$. The condition (\ref{dfn-condition1}) is satisfied since $2^5-1|2^m-1$ and $2^5-1|2\cdot 3^h+1$. By definition, Examples \ref{exp-bent} and \ref{exp-t5} and Table \ref{tab-listxdthree}, we know that the Walsh spectrums of $f_1$ and $f_2$ are
$[\sqrt{q_1}]^{m_+}[-\sqrt{q_1}]^{m_-}$ and $[0]^{\mu_0}[\sqrt{2q_2}]^{\mu_+}[-\sqrt{2q_2}]^{\mu_-}$, respectively. Thus the Walsh spectrum of
$f(x_1,x_2)=f_1(x_1)+f_2(x_2)$ is
$$
[0]^{(m_+ + m_-)\mu_0} [\sqrt{2q_1q_2}]^{m_+ \mu_+ + m_- \mu_-}[-\sqrt{2q_1q_2}]^{m_+ \mu_- + m_- \mu_+},
$$
where $(m_+,m_-)=(\frac{1}{2}(q_1+\sqrt{q_1}),\frac{1}{2}(q_1-\sqrt{q_1}))$ and $\{\mu_0,\mu_+,\mu_-\}$ is determined by (\ref{eqn-wtdsemibentfcode6c1}). Therefore the code $\C_f$ constructed by Theorem \ref{thm-maincodenew} is a three weights linear code over $\f_{2^5}$ and the weight enumerator of the code $\C_f$ is given by (\ref{eqn-weight}).
\end{example}

\begin{example}
Let $A_1\geq 1$ and $A_2\geq 1$, and let $f_1(x_1): \f_{q_1} \rightarrow \f_2$ and $f_2(x_2): \f_{q_2} \rightarrow \f_2$ be
Boolean functions with the Walsh spectrums of $[0]^{m_0}[A_1]^{m_+}[-A_1]^{m_-}$ and $[0]^{\mu_0}[A_2]^{\mu_+}[-A_2]^{\mu_-}$, respectively. Then the Walsh spectrum of $f(x_1,x_2)=f_1(x_1)+f_2(x_2)$ is
$
[0]^{M_0}[A_1A_2]^{M_+}[-A_1A_2]^{M_-},
$
where
$$
(M_0,M_+,M_-)=(m_0(\mu_0+ \mu_+ + \mu_-)+(m_+ + m_-)\mu_0,~m_+\mu_+ +m_-\mu_-,~m_+\mu_- +m_-\mu_+).
$$
If $\f_{2^t}$ is a subfield of $\f_{q_1}$ and $\f_{q_2}$, the code $\C_f$ constructed by Theorem \ref{thm-maincodenew} is a three weights linear code over $\f_{2^t}$ and the weight enumerator of the code $\C_f$ is given by (\ref{eqn-weight}).
\end{example}

\emph{Case 2: Quadratic Functions.}

For all quadratic functions with the following form:
$$
f(x)=\sum_{i=1}^{[\frac{n}{2}]}\tr(c_i~x^{1+2^i}), ~~~(c_i\in\f_q, q=2^n),
$$
the Walsh spectrum of $f(x)$ is $[0]^{2^n-2^{2h}}[2^{n-h}]^{2^{2h-1}+2^{h-1}}[-2^{n-h}]^{2^{2h-1}-2^{h-1}}$, where $h$ can be
determined by quadratic form theory (see \cite{CPT2005}). If $2h=n$, then $f$ is a bent function. Otherwise, $2h< n$, $f$ has three Walsh values and the code $\C_{f}$ is a binary linear code with three weights. The weight enumerators of
$\C_{f}$ is given by Theorem \ref{thm-three}. Remark that $f$ is semibent if and only if $n-h=[\frac{n+1}{2}]$.

Recently, some functions $f$ with three Walsh values $(w_1,w_2,w_3)$ and $w_i\neq 0$ for each $i\in\{1,2,3\}$ have been found
(see \cite[Theorem 3.2]{LY2015}). For such functions $f$, the binary linear code $\C_{f}$ has three weights and its weight enumerator
can be calculated by Theorem \ref{thm-maincode1}.

\section{Concluding remarks}\label{sec-concluding}

In this paper, we generalized the construction of linear codes of Ding's method. Base on this generalization, we constructed two classes of linear codes $\tilde{\C}_{f}$ and $\C_f$ (see Theorem \ref{thm-maincode1}) over $\f_{2^t}$ from a Boolean function $f:\f_q\rightarrow \f_2$  and their weight distributions determined by the Walsh spectrum of this Boolean function $f$, where $q=2^n$ and $\f_{2^t}$ is some subfield of $\f_{q}$.
Moreover, the parameters of more linear codes over $\f_{2^t}$ with a few weights can be derived directly.
Particularly, a number of classes of two-weight and
three-weight codes are derived from some known classes of bent, semibent, monomial and quadratic Boolean functions.
Instead of writing down all these codes, we documented a few classes of them as examples in this paper. In addition,
the complete weight enumerator of the linear code $\tilde{\C}_{f}$  and the weight enumerator of the linear code $\C_f$ presented
in this paper were settled in simple way in terms of the Walsh spectrum of $f$.


For further generalization and research in $\f_{p^n}$ with prime $p\geq 3$ case, we refer the reader to \cite{ZLFH2015}, \cite{TLQZT2015} and \cite{XC2015}.



\end{document}